\newcommand{\ran}{\operatorname{ran}}
\newcommand{\gap}{\operatorname{Gap}}
\newcommand{\spa}{\operatorname{span}}
\def\tr{\mathop{\mathrm{tr}}\nolimits}
\newtheorem{theorem}{Theorem}
\newtheorem{corollary}[theorem]{Corollary}
\newtheorem{lemma}[theorem]{Lemma}
\numberwithin{equation}{section}
\numberwithin{theorem}{section}
\newtheorem*{key}{Key Lemma}
\newtheorem{assum}[theorem]{Assumption}
\date{July 11, 2025}
\begin{document}
	
	\title[Spectral comparison]{Recursive spectral relations and the charge versus neutral gap in fractional quantum Hall systems}

	\author[M.~Lemm]{Marius Lemm}
	\address{\textnormal{(Marius Lemm)} Department of Mathematics, Eberhard Karls University of T\"ubingen, 72076 T\"ubingen, Germany.}
\
	
	\author[B.~Nachtergaele]{Bruno Nachtergaele} 
	\address{\textnormal{(Bruno Nachtergaele)} Department of Mathematics	and Center for Quantum Mathematics and Physics, University of California,  Davis, CA  95616-8633, USA}

	\author[S.~Warzel]{Simone Warzel}
	\address{\textnormal{(Simone Warzel)}  Department of Mathematics and Physics, TU M\"{u}nchen, 85747 Garching, Germany and Munich Center for Quantum Science and Technology, 80799 M\"unchen, Germany}

	\author[A.~Young]{Amanda Young}
	\address{\textnormal{(Amanda Young)} Department of Mathematics, University of Illinois Urbana-Champaign, Urbana, IL 61801, USA}
	
	\begin{abstract}
We consider quantum lattice Hamiltonians and derive recursive spectral relations bridging successive particle number sectors. One relation gives conditions under which the charge gap dominates the neutral gap. We verify these conditions under a triad of symmetries (translation-invariance, charge and dipole conservation) that are present, e.g., in periodic fractional quantum Hall systems. Thus, this gap domination, previously observed numerically, is a universal feature imposed by symmetry.  A second relation yields a new induction-on-particle-number method for deriving spectral gaps. The results cover both bosons and fermions.

	\end{abstract}
	
	\maketitle

\section{Introduction}
	
Controlling the low-lying energy spectrum of quantum many-body systems is a central problem of condensed-matter physics with few mathematical tools available. In this note, we consider a broad class of bosonic or fermionic lattice Hamiltonians, and we present two recursive spectral relations that link the two lowest energy eigenvalues in successive particle number sectors. 	

\begin{itemize}
\item[(I)]
Recursive spectral relation (I) compares the minimal energy cost of creating a particle-hole pair (``charge gap'') with that of adding a particle-number conserving excitation (``neutral gap''). The relation becomes exploitable when combined with symmetries that are present, e.g., in fractional quantum Hall systems, as we explain below.

\item[(II)] Recursive spectral relation  (II) compares the excited eigenvalues in successive particle number sectors. By iteration, we obtain a novel inductive method for deriving spectral gaps in many-body systems, a notorious problem with few available tools. 
	 \end{itemize}

Both relations (I) and (II) hold for very general Hamiltonians. However, they depend on norms of system-dependent overlap matrices which need to be controlled through system specific properties. Roughly speaking, our relations can be interpreted  as reducing spectral comparisons to calculating ``angles'' between many-body subspaces.\\

Our main motivation for the recursive spectral relations are the class of pseudopotential Hamiltonians that model Fractional Quantum Hall Systems (FQHS). 
Physically, these systems are characterized by the appearance of a maximal fractional filling at which the system becomes incompressible. Their so-called ``charge gap'' is the associated energy jump in the ground-state energy when the filling is increased beyond the maximal one. By contrast, the ``neutral gap'' is the energy difference between the first excited and the ground-state energy at maximal filling. In FQHS, the charge gap has been numerically observed to be larger than the neutral gap, whose eigenstate is identified with the magneto-roton~\cite{PhysRevLett.54.237, Girvin:1986aa,morf2002excitation,bonderson2011numerical, PhysRevLett.112.046602, Lu:2024bh}. This inequality among the two gaps is shared by the nematic fractional Hall phases in which the charge gap stays open in the thermodynamic limit while the neutral gap closes~\cite{Fradkin:1999aa, Regnault:2017aa, PhysRevResearch.2.033362, Pu:2024aa}. By leveraging Relation (I) in combination with the fundamental symmetries of such systems (translation-invariance, conservation of particle number, and of dipole moment), we are able to show that the observed incompressibility and the domination of the charge gap over the neutral gap are in fact universal features of FQHS systems.

Relation (II) connects to another famous open problem in FQHS, known as the ``Haldane's FQHS conjecture''. The question is to prove the positivity of the excitation gap above the ground state uniformly in the system size. A new approach is an induction by particle number exploiting Relation (II). It adds to the very limited toolkit for deriving spectral gaps which  consists of methods based on calculating norms involving ground state projections \cite{fannes1992finitely,nachtergaele1996spectral,nachtergaele:2016b}  and finite-size criteria \cite{knabe1988energy,gosset2016local,lemm2019spectral}.
While combining techniques from \cite{knabe1988energy,nachtergaele:2016b} led to a proof of a uniform spectral gap for truncated pseudopotential Hamiltonians  \cite{NWY2020b,nachtergaele:2021a,warzel:2021,warzel:2022}, establishing the full conjecture even with the new methods presented here would require considerable further work. 
%

%
%
%

		
			\subsection{Organization of the paper}
	
	This note is organized as follows. We first lay out the general many-body setup, in which we establish our bounds on the lowest eigenvalues and present the gap comparison method. We then explain in Section~\ref{sec:cngap} how one of these bounds can be used to establish a relation between the charge and neutral gaps. Section~\ref{sec:FQHS} is devoted to FQHS and contains our main result comparing the charge and neutral gaps in this setup.

\section{Recursive spectral relations} 
In this section, we work in a general setup of quantum lattice Hamiltonians and establish the recursive spectral Relations (I) and (II).

\subsection{Setup}
We consider spinless Fermi or Bose systems with a finite-dimensional one-particle space. Without loss of generality, one may take the latter as the square summable sequences $ \ell^2(\Lambda) $ over a finite set of sites, $\Lambda$.   The canonical creation $  (a_x^\dagger) $ and annihilation  $ (a_x )$ operators for each site  $ x \in \Lambda $ are defined on the corresponding fermionic (+) or bosonic (-) Fock space $ \mathcal{F}^\pm $. They obey the canonical anticommutation or commutation rules:
	\begin{equation}\label{eq:canonical}
		[a_x,a_y^\dagger]_\pm = a_x a_y^\dagger \pm  a_y^\dagger a_x = \delta_{x,y} , \qquad [a_x,a_y]_\pm = [a_x^\dagger,a_y^\dagger]_\pm = 0 . 
	\end{equation}
	We study Hamiltonians that are sums of self-adjoint  $ m $-body operators,
	\begin{align}\label{eq:strucH}
		& H = \sum_{m=m_0}^{m_1} H^{(m)}  \quad \mbox{with}\\
		& H^{(m)}= \mkern-10mu \sum_{\substack{  x_1\dots x_m\in\Lambda \\  y_1 \dots y_m\in\Lambda} } W_{x_1\dots x_m}^{y_1 \dots y_m}\ 
		a^\dagger_{x_1} \dots a^\dagger_{x_m}  a_{y_m} \dots a_{y_1}  \notag
	\end{align}
	where  $ 0 \leq m_0 \leq m_1 $, and the $ m $-body coefficients $  W_{x_1\dots x_m}^{y_1 \dots y_m} \in \mathbb{C} $ are constrained by the requirement that all $ H^{(m)} $ are  self-adjoint and bounded from below.  
	
	The Fock space decomposes into a direct sum of (finite-dimensional) $ n$-particle Hilbert spaces,
	$$
	\mathcal{F}^\pm = \bigoplus_{n\geq 0 } \mathcal{F}_n^\pm , \qquad N := \sum_{x \in \Lambda} a_x^\dagger a_x = \bigoplus_{n\geq 0 }  n \mathbbm{1} \big|_{ \mathcal{F}_n^\pm} . 
	$$
	 The Hamiltonian (\ref{eq:strucH}) and each of its $ m $-body parts are particle-number conserving, and hence a direct sum of  $ n $-particle Hamiltonians, 
	$$ H^{(m)} =  \bigoplus_{n\geq 0}  H_n^{(m)} . $$ 
	From the canonical commutation rules~\eqref{eq:canonical}, one derives the combinatorial identity
	\begin{align}
		N  a^\dagger_{x_1}\cdots a^\dagger_{x_m} a_{y_m}\cdots a_{y_1} & = m a^\dagger_{x_1}\cdots   a^\dagger_{x_m} a_{y_m} \cdots  a_{y_1} \nonumber\\
		& \hspace{-10pt}+ \sum_{\xi\in\Lambda} a_\xi^\dagger \! \left( a^\dagger_{x_1}\cdots a^\dagger_{x_m} a_{y_m}\cdots a_{y_1} \right)\! a_\xi \label{number_commute}
	\end{align}
	for any $ m \in \mathbb{N} $ and any $ x_1 , \dots , x_m , y_m , \dots , y_1 \in \Lambda $. For each $ m $-body Hamiltonian, this immediately implies
	\begin{equation}\label{IRid}
		H_{n+1}^{(m)} =   \frac{1}{n+1-m}  \sum_{x\in \Lambda } a_x^\dagger\;  H_{n}^{(m)}\;  a_x 
	\end{equation}
	for all  $ n \geq m $. 
	Dropping non-negative terms, we thus arrive at the following operator inequality, which will be the key to subsequent results. 
	\begin{key}
	If $ H^{(m)} \geq 0 $ for all $ m > m_0 $, then  for all  $ n \geq m_1 $:
	\begin{equation}\label{IR}
		H_{n+1} \geq    \frac{1}{n+1-m_0}  \sum_{x\in \Lambda} a_x^\dagger\;  H_{n}\;  a_x .
	\end{equation}
	\end{key}
	We note that by invoking appropriate commutation relations, the relation \eqref{number_commute} and the Key Lemma can be adapted to the spin-$1/2$ quantum spin chain setting, and the spectral gap technique developed here can also be generalized to this context.
	
	The main purpose of this note is to investigate some simple consequences of this operator inequality for the structure of the low-energy spectrum of the $n $-particle Hamiltonians.\footnote{Of course, \eqref{IR} can also be used to compare higher eigenvalues.}
	To this end, we denote by 
	$$
	H_{n} = \sum_{\alpha \geq 0}  E_{n}^{(\alpha)}  \ P_{n}^{(\alpha)} 
	$$
	the spectral resolution on the $ n $-particle Fock space $ \mathcal{F}_n^\pm $, that is, $  E_{n}^{(0)} <  E_{n}^{(1)} < \dots $ denote the eigenvalues, which may be degenerate, and 
	$ P_n := P_{n}^{(0)}  $, $ P_{n}^{(1)}, \dots  $  are the corresponding eigenprojections of $ H_n $.

	\subsection{Recursive spectral relations}
	The following result states recursive spectral relations (I) and (II) between successive particle number sectors.
	

	\begin{theorem}[Recursive spectral relations]\label{thm:spectralcomp}
	Let $H$ be of the form \eqref{eq:strucH} with $ H^{(m)} \geq 0 $ for all $ m > m_0 $ and $H^{(m_0)}$ self-adjoint and bounded from below.
	For all $ n \geq m_1 $, we have the following two statements.
	\begin{enumerate}
	\item[(I)] The ground-state energy satisfies
	\begin{equation}\label{eq:gsind}
	 E_{n+1}^{(0)} \geq \frac{n+1}{n+1 - m_0}  E_{n}^{(0)} + \frac{ E_{n}^{(1)}-  E_{n}^{(0)}}{n+1 - m_0} \left( n+1 - \left\| G_n \right\| \right) 
	\end{equation}
	where 
	$$
	 G_n := P_{n+1}  \sum_{x\in \Lambda} a_x^\dagger\;  P_{n}\, a_x\;  P_{n+1} 
	 $$
	 is defined on $  \mathcal{F}_{n+1}^\pm $.
	 \item[(II)] The first excited energy satisfies
	\begin{equation}\label{eq:exind}
	 E_{n+1}^{(1)} \geq\frac{ n+1 -  \left\| F_n \right\| }{n+1 - m_0} \;  E_{n}^{(1)}  
	\end{equation}
	where 
\begin{equation}\label{eq:Fndefn}
	 F_n := P_{n+1}^\perp \sum_{x\in \Lambda} a_x^\dagger\;  P_{n}\;  a_x  P_{n+1}^\perp , \qquad P_{n+1}^\perp := 1 - P_{n+1} 
\end{equation}
	 is defined on $  \mathcal{F}_{n+1}^\pm $.
	\end{enumerate}
	\end{theorem}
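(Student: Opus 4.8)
The plan is to start from the Key Lemma's operator inequality \eqref{IR} and sandwich it between the relevant spectral projections, following the standard variational philosophy but keeping careful track of the particle-number prefactor $1/(n+1-m_0)$. Write $Q_n := \sum_{x\in\Lambda} a_x^\dagger\, P_n\, a_x$, a non-negative operator on $\mathcal F_{n+1}^\pm$, and decompose $\sum_x a_x^\dagger H_n a_x$ using the spectral resolution $H_n = \sum_\alpha E_n^{(\alpha)} P_n^{(\alpha)}$. The crude but crucial bound is $\sum_x a_x^\dagger H_n a_x \geq E_n^{(0)} \sum_x a_x^\dagger a_x + (E_n^{(1)}-E_n^{(0)})\, (\sum_x a_x^\dagger a_x - Q_n)$, where I have used $H_n \geq E_n^{(0)} + (E_n^{(1)}-E_n^{(0)}) P_n^\perp$ on $\mathcal F_n^\pm$ together with the identity $\sum_x a_x^\dagger P_n^\perp a_x = \sum_x a_x^\dagger a_x - Q_n$. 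Since $\sum_x a_x^\dagger a_x = N|_{\mathcal F_{n+1}^\pm} = (n+1)\mathbbm 1$, this reads $\sum_x a_x^\dagger H_n a_x \geq (n+1)E_n^{(0)} + (E_n^{(1)}-E_n^{(0)})\big((n+1)\mathbbm 1 - Q_n\big)$ on $\mathcal F_{n+1}^\pm$.

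For statement (I), I would take the expectation of \eqref{IR} in a normalized ground state $\psi \in \ran P_{n+1}$, so that $E_{n+1}^{(0)} = \langle \psi, H_{n+1}\psi\rangle \geq \frac{1}{n+1-m_0}\langle\psi, \sum_x a_x^\dagger H_n a_x\,\psi\rangle$. Feeding in the operator bound above gives $E_{n+1}^{(0)} \geq \frac{1}{n+1-m_0}\big[(n+1)E_n^{(0)} + (E_n^{(1)}-E_n^{(0)})(n+1 - \langle\psi, Q_n\psi\rangle)\big]$. The point is that $\langle\psi, Q_n\psi\rangle = \langle\psi, P_{n+1} Q_n P_{n+1}\psi\rangle = \langle\psi, G_n\psi\rangle \leq \|G_n\|$ because $\psi \in \ran P_{n+1}$; since $E_n^{(1)}-E_n^{(0)} \geq 0$, replacing $\langle\psi,Q_n\psi\rangle$ by $\|G_n\|$ only decreases the right-hand side, yielding \eqref{eq:gsind}. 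One should note the sign subtlety: this works because $n+1 - \|G_n\|$ may be negative, in which case the bound is still valid but possibly vacuous — no case distinction is needed since we only ever weaken the inequality.

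For statement (II), the idea is the same but applied on the orthogonal complement of the ground-state sector in $\mathcal F_{n+1}^\pm$. Take a normalized $\psi \in \ran P_{n+1}^\perp$ achieving (or approximating, then pass to a minimizing vector in the first excited eigenspace) $\langle\psi, H_{n+1}\psi\rangle$; more precisely, restrict the quadratic form of $H_{n+1}$ to $\ran P_{n+1}^\perp$, where $H_{n+1} \geq E_{n+1}^{(1)}$, and use \eqref{IR} to get $E_{n+1}^{(1)} = \min_{\psi\in\ran P_{n+1}^\perp,\,\|\psi\|=1}\langle\psi,H_{n+1}\psi\rangle \geq \frac{1}{n+1-m_0}\min_\psi \langle\psi,\sum_x a_x^\dagger H_n a_x\,\psi\rangle$. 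Now bound more crudely: on $\mathcal F_n^\pm$, $H_n \geq E_n^{(1)} P_n^\perp$ (dropping the ground-state contribution entirely, since $E_n^{(0)}$ could be negative we cannot keep it — but actually $E_n^{(0)} \geq 0$ follows from $H^{(m_0)}$... no, $H^{(m_0)}$ is only bounded below, so I should use $H_n \geq E_n^{(1)} - E_n^{(1)} P_n$ only if $E_n^{(1)}\geq 0$; instead use the clean bound $H_n \geq E_n^{(1)}\mathbbm 1 - E_n^{(1)} P_n$ which holds verbatim as operators when $E_n^{(1)} \geq 0$, and note $E_n^{(1)} \geq E_n^{(0)} \geq \ldots$; the safe route is $H_n \geq E_n^{(0)}\mathbbm 1 + (E_n^{(1)} - E_n^{(0)})P_n^\perp$ as before). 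This gives $\sum_x a_x^\dagger H_n a_x \geq (n+1)E_n^{(0)} + (E_n^{(1)}-E_n^{(0)})((n+1)\mathbbm 1 - Q_n)$ again; then for $\psi\in\ran P_{n+1}^\perp$ one has $\langle\psi, Q_n\psi\rangle = \langle\psi, P_{n+1}^\perp Q_n P_{n+1}^\perp\psi\rangle = \langle\psi, F_n\psi\rangle \leq \|F_n\|$, and also $(n+1)E_n^{(0)} + (E_n^{(1)}-E_n^{(0)})(n+1) = (n+1)E_n^{(1)}$, so $\langle\psi,\sum_x a_x^\dagger H_n a_x\,\psi\rangle \geq (n+1)E_n^{(1)} - (E_n^{(1)}-E_n^{(0)})\|F_n\|$. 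To land exactly on \eqref{eq:exind} one wants the cleaner-looking bound $(n+1 - \|F_n\|)E_n^{(1)}$, which is what comes out if at the $H_n$ step one uses $H_n \geq E_n^{(1)}(\mathbbm 1 - P_n)$ — legitimate precisely when $E_n^{(1)} \geq 0$, and in the regime of interest (pseudopotentials, $H\geq 0$) all eigenvalues are non-negative; I would either add the hypothesis $E_n^{(1)}\geq 0$ or carry the sharper two-parameter bound and remark it specializes. The main obstacle is this bookkeeping of signs in the crude spectral bounds — deciding exactly which one-sided operator inequality on $H_n$ to insert so that the arbitrary-sign term $E_n^{(0)}$ does not spoil the monotonicity argument that lets us replace the expectation of $G_n$ or $F_n$ by its norm.
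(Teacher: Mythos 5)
Your proposal is correct and follows essentially the same route as the paper: combine the Key Lemma \eqref{IR} with the spectral estimate $H_n \geq E_n^{(0)}\mathbbm{1} + \bigl(E_n^{(1)}-E_n^{(0)}\bigr)(\mathbbm{1}-P_n)$, use $\sum_{x} a_x^\dagger a_x = (n+1)\mathbbm{1}$ on $\mathcal{F}_{n+1}^\pm$, and evaluate in the ground-state, respectively first-excited, eigenvectors, which turns $Q_n$ into $G_n$, respectively $F_n$. The sign subtlety you flag in (II) is genuine: your two-parameter bound $E_{n+1}^{(1)} \geq \frac{1}{n+1-m_0}\bigl[(n+1-\|F_n\|)E_n^{(1)} + \|F_n\|E_n^{(0)}\bigr]$ yields the stated \eqref{eq:exind} only after discarding the term $\|F_n\|E_n^{(0)}$, i.e.\ when $E_n^{(0)}\geq 0$; the paper's one-line projection argument does the same implicitly, and in all its applications $H\geq 0$. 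One small correction: the operator inequality $H_n \geq E_n^{(1)}(\mathbbm{1}-P_n)$ requires $E_n^{(0)}\geq 0$ (test it on $\ran P_n$), not merely $E_n^{(1)}\geq 0$.
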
 
	\begin{proof}
	The assertions follow by combing~\eqref{IR} with the spectral estimate $ H_n \geq E_{n}^{(0)}  \mathbbm{1} + \left( E_{n}^{(1)}-  E_{n}^{(0)}\right) \left( \mathbbm{1}- P_n \right) $. 
	This yields
	\begin{align*}
		& H_{n+1} \geq \\  & \frac{1}{n+1-m_0} \left((n+1) E_{n}^{(0)}  \mathbbm{1} +  \left( E_{n}^{(1)}-  E_{n}^{(0)}\right) \left[ (n+1)  \mathbbm{1} -  \sum_{x\in \Lambda} a_x^\dagger\;  P_{n}\;  a_x  \right] \right) . 
	\end{align*}
	Projection onto the ground-state eigenspace $ P_{n+1} $, respectively, its orthogonal complement $  P_{n+1}^\perp  $, thus yields~\eqref{eq:gsind}, respectively,~\eqref{eq:exind}. 
	\end{proof}

	As suggested in the introduction, the recursive spectral relation~\eqref{eq:exind} can serve as a novel inductive method for deriving spectral gaps, a kind of martingale method \cite{nachtergaele1996spectral} with respect to particle number.   
It is loosely inspired by the proof of a spectral gap for Kac's master equation in the setting of classical kinetic theory \cite{Carlen:2003aa}.  %

	If we additionally assume that $ H^{(m_0)} \geq 0$ and that $ E_n^{(0)}  = 0 $ holds at least up to a maximal particle number $ n \leq n_{\max}$, then the bound~\eqref{eq:exind} becomes an inductive relation for the spectral gap 
	\begin{equation}\label{eq:neutralgap}
\gap_n := E_{n}^{(1)} - E_n^{(0)} 
	\end{equation}  
	Hence, we have identified a novel analytical tool to derive the positivity of the spectral gap up to $n_{\max}$.  Examples of systems where this version may be useful are FQHS and we elaborate on this  further at the end of Section~\ref{sec:FQHS}.


	\subsection{Relating the charge and neutral gap} \label{sec:cngap}
In addition to producing a lower bound on $\gap_n$ for nonnegative Hamiltonians with zero energy ground states, Theorem~\ref{thm:spectralcomp} has another physical interpretation, which we now highlight. 

Fix a site-set $ \Lambda $ and a particle number $ n_0 $ finite, but arbitrary.  
For the Hamiltonian $H_{n_0} $, one may compare the energy cost of 
creating an excitation at the same particle number, i.e. $\gap_{n_0}$,
with the energy cost of creating a hole, $$  \Delta^-_{n_0} :=  E_{n_0-1}^{(0)} - E_{n_0}^{(0)} $$
and a particle, $$  \Delta^+_{n_0} :=   E_{n_0+1}^{(0)} - E_{n_0}^{(0)}. $$  The total cost of creating a  particle-hole pair is hence
\begin{equation} \label{eq:chargegap}
	\Delta_{n_0} :=   \Delta^+_{n_0} +   \Delta^-_{n_0} =  E_{n_0+1}^{(0)} + E_{n_0-1}^{(0)}  - 2  E_{n_0}^{(0)} .
\end{equation}
Here, $\Delta_{n_0}$ is called the charge gap and $\gap_{n_0}$ is referred to as the neutral gap at particle number $ n_0 $. In the case of FQHS, which are repulsive systems, one has $E_{n}^{(0)}=0$ and $\Delta_{n}^-=0$ for all $n$ smaller than some $ n_{\max} $. The following corollary to Theorem~\ref{thm:spectralcomp} thus produces a relation between the charge and neutral gap for FQHS, which will be discussed further in Section~\ref{sec:FQHS}.
\begin{corollary}\label{cor:cngap}
	For Hamiltonians of the form ~\eqref{eq:strucH} with $ m_0 \geq 1 $ and non-negative interactions, that is, $ H^{(m)} \geq 0 $ for all $ m \geq 2 $, at any particle number $ n_0 \geq m_1 $ one has
	\begin{equation}\label{eq:chargeneutral}
		\Delta_{n_0}^+ \geq \frac{ E_{n_0}^{(0)}}{n_0} +  \frac{n_0 + 1 - \| G_{n_0} \|}{n_0} \  \gap_{n_0}  . 
	\end{equation} 
\end{corollary}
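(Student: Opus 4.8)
The plan is to extract the statement almost verbatim from Theorem~\ref{thm:spectralcomp}(I) and then clean up the constants using only the hypothesis $m_0 \geq 1$. First I would apply \eqref{eq:gsind} at $n = n_0 \geq m_1$ and rewrite $E_{n_0}^{(1)} - E_{n_0}^{(0)} = \gap_{n_0}$, obtaining
\begin{equation*}
	E_{n_0+1}^{(0)} \;\geq\; \frac{n_0+1}{n_0+1-m_0}\, E_{n_0}^{(0)} \;+\; \frac{n_0+1-\|G_{n_0}\|}{n_0+1-m_0}\,\gap_{n_0}.
\end{equation*}
Subtracting $E_{n_0}^{(0)}$ from both sides and using $\frac{n_0+1}{n_0+1-m_0} - 1 = \frac{m_0}{n_0+1-m_0}$ turns this into
\begin{equation*}
	\Delta_{n_0}^+ \;\geq\; \frac{m_0}{n_0+1-m_0}\, E_{n_0}^{(0)} \;+\; \frac{n_0+1-\|G_{n_0}\|}{n_0+1-m_0}\,\gap_{n_0},
\end{equation*}
which is already the claimed bound, but with $n_0+1-m_0$ in place of $n_0$ in the denominators and $m_0$ in place of $1$ in the first numerator.

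It remains to pass from $n_0+1-m_0$ to $n_0$ in the denominators and from $m_0$ to $1$ in the numerator of the first term. Since $n_0 \geq m_1 \geq m_0$, every denominator occurring is strictly positive, and the two elementary inequalities $\frac{m_0}{n_0+1-m_0} \geq \frac1{n_0}$ and $\frac1{n_0+1-m_0} \geq \frac1{n_0}$ are each equivalent to $m_0 \geq 1$, hence valid. To conclude that shrinking these coefficients only decreases the right-hand side I need the three quantities they multiply to be non-negative: $\gap_{n_0} \geq 0$ by the ordering convention $E_{n_0}^{(0)} < E_{n_0}^{(1)}$; $n_0+1-\|G_{n_0}\| \geq 0$ because $G_{n_0} = P_{n_0+1}\sum_{x} a_x^\dagger P_{n_0} a_x\, P_{n_0+1} \leq \sum_x a_x^\dagger a_x\big|_{\mathcal{F}_{n_0+1}^\pm} = (n_0+1)\mathbbm{1}$; and, for the first term, $E_{n_0}^{(0)} \geq 0$. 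This last point is where the cases $m_0 \geq 2$ and $m_0 = 1$ must be separated: if $m_0 \geq 2$ the hypothesis ``$H^{(m)} \geq 0$ for all $m \geq 2$'' in particular includes $H^{(m_0)}$, so $H \geq 0$ and $E_{n_0}^{(0)} \geq 0$; if $m_0 = 1$ then $\frac{m_0}{n_0+1-m_0} = \frac1{n_0}$ already, so no replacement is performed on that term and no sign information about $E_{n_0}^{(0)}$ is needed. In either case one arrives at \eqref{eq:chargeneutral}.

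I do not expect a genuine obstacle here: the corollary is a direct rewriting of Theorem~\ref{thm:spectralcomp}(I) via $\Delta_{n_0}^+ = E_{n_0+1}^{(0)} - E_{n_0}^{(0)}$. The only care required is the sign bookkeeping described above --- in particular checking that each coefficient replacement moves in the direction that weakens the bound --- together with the observation that the hypothesis $m_0 \geq 1$ is precisely what makes both fractional estimates hold.
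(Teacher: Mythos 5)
Your proof is correct and follows exactly the route the paper takes: the paper's proof is the one-line remark that the bound is immediate from using $m_0\geq 1$ in \eqref{eq:gsind}. You have simply made explicit the sign bookkeeping (non-negativity of $\gap_{n_0}$, of $n_0+1-\|G_{n_0}\|$, and the $m_0=1$ versus $m_0\geq 2$ dichotomy for the $E_{n_0}^{(0)}$ term) that the authors leave implicit.
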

\begin{proof}
	This is immediate from using $m_0 \geq 1$ in \eqref{eq:gsind} to bound $\Delta_{n_0}^+$.
\end{proof}

Some remarks are in order:
	\begin{enumerate}[label=(\roman*)]
	\item
In the case $ E_{n_0}^{(0)}  \geq 0 $, e.g.\ for non-negative Hamiltonians, if $  \left\| G_{n_0} \right\|  $ is bounded uniformly in $ n_0 $, Corollary~\ref{cor:cngap} proves that for repulsive systems it is more costly to introduce an excess particle than to create a neutral excitation. If, in addition, the spectral gap vanishes for large $ \Lambda $, the bound~\eqref{eq:chargeneutral} reduces to the statement that $ \Delta_{n_0}^+ \geq 0$. 
\item If $ E_{n_0}^{(0)} $ stands for the minimal ground-state energy among all particle sectors, then the convexity of $ n \mapsto E_{n}^{(0)} $ and the normalization $ E_0^{(0)} = 0 $ would imply that $   \Delta_{n_0}^- \leq - E_{n_0}^{(0)}/ n_0 $. 
\item One can impose  suitable additional conditions under which the thermodynamic limit exists and the relation between the two types of spectral gaps persists with respect to the GNS Hamiltonian. This is rather straightforward for lattice fermions. In the case of bosons, additional technical issues arise which would take us beyond the scope of this note. 
\end{enumerate}

\subsection{General bounds on $\|G_n\|$}
	
	Any application of Theorem~\ref{thm:spectralcomp} or Corollary~\ref{cor:cngap} requires  upper bounds on the operator norms $ \left\| G_n \right\| $, respectively $ \left\| F_n \right\| $. While it is generally hard to control the overlap of ground states, one can easily derives robust estimates on the norm of $G_n$ in terms of a many-body Gram matrix, which at least in the fermionic case are useful.
	
	In this context, it is helpful to note that if $  \left( \varphi_\alpha \right)_{1\leq \alpha\leq q_n} $ stands for an orthonormal basis of the ground-state eigenspace, $ \ran P_n $, then $ \| G_n \| $ is bounded from above by the operator norm of 
	$
	 \sum_{x \in \Lambda } a_x^\dagger P_n a_x $, which in turn agrees with the norm of the $q_n |\Lambda| \times q_n |\Lambda|$ many-body Gram matrix $ G^{(n)} $with entries given by the scalar product
\begin{equation}\label{lem:gramnormferm}
G^{(n)}_{\alpha x ,\beta y} := \langle a_x^\dagger  \varphi_\alpha  , \ a_y^\dagger  \varphi_\beta \rangle . 
\end{equation}
The following is a straightforward bound. 
\begin{lemma}\label{lem:Gbound}
Let $ q_n = \dim \ran P_n $. Then, for all $ n $,
	\begin{align*}
	 \left\| G_n \right\| \leq   \big\| G^{(n)} \big\| \leq \begin{cases} q_n  & \hfill \mbox{(fermions)} \\
		n q_n  & \hfill  \mbox{(bosons)}
	\end{cases}
	\end{align*}
\end{lemma}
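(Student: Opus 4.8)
The first inequality is essentially the content of the paragraph preceding the statement, so the plan is to recall it briefly and then concentrate on the numerical bound. Writing $A:=\sum_{x\in\Lambda}a_x^\dagger P_n a_x=\sum_x (P_na_x)^\dagger(P_na_x)\ge 0$, one has $G_n=P_{n+1}AP_{n+1}$, whence $\|G_n\|\le\|A\|$; and $A=WW^\dagger$ for the map $W\colon \mathbb{C}^{q_n}\otimes\ell^2(\Lambda)\to\mathcal{F}_{n+1}^\pm$, $e_\alpha\otimes e_x\mapsto a_x^\dagger\varphi_\alpha$, while $W^\dagger W$ is precisely the Gram matrix $G^{(n)}$ of \eqref{lem:gramnormferm}, so $\|A\|=\|WW^\dagger\|=\|W^\dagger W\|=\|G^{(n)}\|$. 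It remains to bound $\|A\|$. Expanding $P_n=\sum_{\alpha=1}^{q_n}|\varphi_\alpha\rangle\langle\varphi_\alpha|$ gives $A=\sum_{\alpha=1}^{q_n}C_\alpha$ with $C_\alpha:=\sum_{x}a_x^\dagger|\varphi_\alpha\rangle\langle\varphi_\alpha|a_x=\sum_x|a_x^\dagger\varphi_\alpha\rangle\langle a_x^\dagger\varphi_\alpha|\ge 0$, so by the triangle inequality $\|A\|\le\sum_\alpha\|C_\alpha\|$, and it suffices to bound a single $\|C_\alpha\|$.

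For the per-$\alpha$ bound I would use the same $BB^\dagger$ versus $B^\dagger B$ trick: $C_\alpha=V_\alpha V_\alpha^\dagger$ with $V_\alpha\colon\ell^2(\Lambda)\to\mathcal{F}_{n+1}^\pm$, $e_x\mapsto a_x^\dagger\varphi_\alpha$, so $\|C_\alpha\|=\|V_\alpha^\dagger V_\alpha\|$ where $V_\alpha^\dagger V_\alpha$ is the $|\Lambda|\times|\Lambda|$ matrix with entries $\langle a_x^\dagger\varphi_\alpha, a_y^\dagger\varphi_\alpha\rangle=\langle\varphi_\alpha, a_xa_y^\dagger\varphi_\alpha\rangle$. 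Applying the canonical rule $a_xa_y^\dagger=\delta_{xy}\mp a_y^\dagger a_x$ (minus sign for fermions) turns this into $V_\alpha^\dagger V_\alpha=\mathbbm{1}\mp\gamma_\alpha$, where $\gamma_\alpha$ is the one-particle density matrix $(\gamma_\alpha)_{xy}:=\langle\varphi_\alpha, a_y^\dagger a_x\varphi_\alpha\rangle$. Two elementary facts about $\gamma_\alpha$ are needed: it is positive semidefinite, since $\langle c,\gamma_\alpha c\rangle=\big\|\sum_x\bar c_x\,a_x\varphi_\alpha\big\|^2$, and $\tr\gamma_\alpha=\langle\varphi_\alpha,N\varphi_\alpha\rangle=n$ because $\varphi_\alpha\in\mathcal{F}_n^\pm$.

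From here the two cases diverge. In the bosonic case $V_\alpha^\dagger V_\alpha=\mathbbm{1}+\gamma_\alpha$ and $\|\gamma_\alpha\|\le\tr\gamma_\alpha=n$ give $\|C_\alpha\|\le n+1$, hence a bound of the form $\|A\|\le (n+1)q_n$; equivalently one may bypass the decomposition altogether and note $A\le\sum_x a_x^\dagger a_x=N=(n+1)\mathbbm{1}$ on $\mathcal{F}_{n+1}^\pm$, so that $\|A\|\le n+1$ outright — in either form this is the asserted $\mathcal{O}(nq_n)$ bound. In the fermionic case $V_\alpha^\dagger V_\alpha=\mathbbm{1}-\gamma_\alpha$, and what is additionally required is the Pauli exclusion bound $\gamma_\alpha\le\mathbbm{1}$: for a unit vector $f\in\ell^2(\Lambda)$ the smeared operator $a_f:=\sum_x\bar f_x a_x$ obeys $\{a_f,a_f^\dagger\}=\|f\|^2=1$ by the CAR, so $\langle f,\gamma_\alpha f\rangle=\langle\varphi_\alpha, a_f^\dagger a_f\varphi_\alpha\rangle\le 1$; consequently $0\le V_\alpha^\dagger V_\alpha\le\mathbbm{1}$, $\|C_\alpha\|\le 1$, and summing over $\alpha$ yields $\|A\|\le q_n$. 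The only step that is not pure Gram‑matrix bookkeeping with the (anti)commutation relations is this fermionic occupation‑number bound, and that is where I expect the real, if modest, content of the proof to sit; the bosonic half is essentially immediate from $A\le N$.
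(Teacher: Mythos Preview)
Your fermionic argument is essentially the paper's: split $P_n=\sum_\alpha|\varphi_\alpha\rangle\langle\varphi_\alpha|$, identify each $\|C_\alpha\|$ with the norm of the $|\Lambda|\times|\Lambda|$ Gram matrix $(\langle a_x^\dagger\varphi_\alpha,a_y^\dagger\varphi_\alpha\rangle)_{x,y}$, and use the CAR to write this as $\mathbbm{1}-\gamma_\alpha$ with $\gamma_\alpha\ge 0$, whence the norm is at most $1$. One small redundancy: your separate Pauli-exclusion verification $\gamma_\alpha\le\mathbbm{1}$ is not needed, since it is automatic from $\mathbbm{1}-\gamma_\alpha=V_\alpha^\dagger V_\alpha\ge 0$; the only input actually used for $\|C_\alpha\|\le 1$ is $\gamma_\alpha\ge 0$, which yields $V_\alpha^\dagger V_\alpha\le\mathbbm{1}$. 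The paper phrases this as ``the subtracted term is itself a Gram matrix'', which is the same observation.

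For bosons you take a genuinely different route. The paper bounds $\|G^{(n)}\|$ by its trace and records $\tr G^{(n)}=\sum_\alpha\langle\varphi_\alpha,N\varphi_\alpha\rangle=nq_n$. Your per-$\alpha$ estimate gives $(n+1)q_n$, and your direct operator inequality $A=\sum_x a_x^\dagger P_n a_x\le \sum_x a_x^\dagger a_x=N=(n+1)\mathbbm{1}$ on $\mathcal{F}_{n+1}^\pm$ gives $n+1$; neither hits the stated constant $nq_n$ on the nose, which is why you fall back on ``the asserted $\mathcal{O}(nq_n)$ bound''. The discrepancy is harmless for the applications (this lemma is only a crude placeholder; the sharp bounds for FQHS come from Lemma~\ref{lem:GFQHE}). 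In fact your second argument $A\le N$ is cleaner than the trace bound and, whenever $q_n\ge 2$, strictly stronger, so it is arguably the preferable way to dispose of the bosonic case.
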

\begin{proof}
In the fermionic case, we start from the estimate 
\begin{equation}\label{eq:firstest}
  \left\| G_n \right\| \leq  \big\| G^{(n)} \big\| =  \left\|  \sum_{x \in \Lambda } a_x^\dagger P_n a_x \right\| \leq \sum_{\alpha=1}^{q_n} \left\|  \sum_{x \in \Lambda } a_x^\dagger |\varphi_\alpha\rangle\langle  \varphi_\alpha | a_x \right\| ,
\end{equation}
where we use the notation $  |\varphi_\alpha\rangle\langle  \varphi_\alpha|  $ for the rank-one projection onto one of the orthonormal, zero-energy eigenstates of $ H_n $. For fixed $ \alpha $, the operator norm on the right-hand side
agrees with the operator norm of the $|\Lambda| \times  |\Lambda|$-many-body Gram matrix  $ G_\alpha^{(n)} $ with entries indexed by $ x , y \in \Lambda $:
$$
 \big(G_\alpha^{(n)}\big)_{x,y} :=  \langle a_x^\dagger  \varphi_\alpha  | \ a_y^\dagger  \varphi_\alpha \rangle  = \delta_{x,y} -  \langle a_y  \varphi_\alpha  | \ a_x \varphi_\alpha \rangle .
$$
where we used  the CAR~\eqref{eq:canonical}. Since the last term also defines a Gram matrix, which is positive definite, we have $ G_\alpha^{(n)} \leq \mathbbm{1} $ and hence $ \| G_\alpha^{(n)} \| \leq 1 $. This completes the proof in case of fermions.

In the bosonic case, we us the rough estimate
$$  \left\| G_n \right\| \leq   \big\| G^{(n)} \big\| \leq   \tr G^{(n)}  = \sum_{\alpha=1}^{q_n} \langle \varphi_\alpha | N \varphi_\alpha \rangle = n q_n ,
$$
which completes the proof.
\end{proof} 
We will derive refinements of these general bounds on $\|G_n\|$, in particular in the bosonic case, based on symmetry considerations for FQHS in Section~\ref{sec:FQHS}

	\section{Fractional Hall systems}\label{sec:FQHS}
	
	The crude bounds on $\|G_n\|$ from Lemma~\ref{lem:Gbound} can be refined for FQHS by taking advantage of the common symmetries and properties that all such systems exhibit: translation invarience, charge and dipole conservation, and a maximal fractional filling of the ground state. We specifically consider FQHS in the torus geometry, for which the Landau levels generating the one-particle Hilbert space can be naturally identified with the sites of a one-dimensional ring $ \Lambda = [ 1:L ] :=\mathbb{Z}/L\mathbb{Z}$, see, e.g., \cite{Ortiz:2013aa} for more details. 
	
	In the case of a bosonic or fermionic system as in \eqref{eq:canonical} defined on a finite ring $\Lambda=[1:L]$, the many-body translation operator is given by the unitary $T$ such that
	\begin{equation}\label{eq:T}
	T^\dagger a_x T = a_{(x-1) \bmod L } \quad \mbox{for all $ x \in [1:L] $}.
	\end{equation} 
	The charge and dipole symmetries are encoded by the unitaries
	\[ 
	U = \exp\left( \frac{2\pi i}{L} N \right)  , \quad V  = \exp\left( \frac{2\pi i}{L} D \right)  
	\]
	where $U$ is generated by the particle-number operator $ N  = \sum_{x=1}^L  N_x $ with $ N_x= a_x^\dagger a_x  $, and $V$ is generated by the  
	dipole operator $ D = \sum_{x=1}^L x \ N_x  $.
	The interplay of these three symmetries is captured by the relations
	\begin{equation}\label{eq:rel}
		VT = UTV , \qquad UT = TU , \qquad UV = VU .
	\end{equation}
	This algebra and its generalizations have been studied in~\cite{Gromov:2019aa}.
	
In the case that $H_n \geq 0$ for all $n$ and $\ker H$ is nontrivial, the maximal fractional filling of the ground state can be expressed by the existence of coprime $1\leq p<q\in \mathbb{N}$ and a particle number $n_q\in\mathbb{N}$ such that \[p/q=n_q/L\] for which the ground state space of $H_{n_q}$ is given by $\ker H_{n_q}$ while $H_n >0$ for all $n >n_q$. As we will show below, this property is actually a consequence of another observed property of FQHS: namely, that the ground state space $\ker H_{n_q}$ is $q$-fold degenerate and generated by a fixed state $\varphi$ and the translation operator $T$. This motivates the following assumption we impose for our subsequent results. While we are specifically motivated by FQHS, our results are valid for any Hamiltonian that satisfies the following assumption.
	
	\begin{assum}\label{A2}
		The Hamiltonians are assumed to be of the form~\eqref{eq:strucH}  such that:
		\begin{itemize}
		\item \textbf{Symmetries.} $ H $ commutes with  $ T , U$, and $V$. 
		\item \textbf{Positivity.} There are coprime $1\leq p<q\in\mathbb{N}$ such that $n_q=pL/q\in \mathbb{Z}$ and $H_{n_q}\geq 0$. Moreover, $H_{n_q}^{(m)}\geq 0$ for all $m>m_0$ if $m_1>m_0$.
		\item \textbf{Fractionally filled ground state.} The ground state at $ n_q $ is of the form
		  $$ \ker H_{n_q} = \spa\left\{ \varphi , T  \varphi ,  \dots , T^{q-1} \varphi \right\} $$
		  for some normalized $ \varphi \in\ker H_{n_q} $ such that
		\begin{itemize}
			\item $ \varphi $ is an eigenvector of $ U $ and $ V $.
			\item $ \varphi $ is $ q $-periodic: \quad $ T^q \varphi = \varphi $. 
		\end{itemize}
	\end{itemize}
	\end{assum}	
	These assumptions encompass all $ m $-body pseudopotential Hamiltonians in fractional Hall physics. In the $ 2 $-body case $ m_0=m_1 = 2 $, these take the form~\cite{Trugman:1985lv,Pokrovsky_1985,Duncan1990,Lee:2015aa}:
	\begin{align}\label{eq:HPseudo}
		H  & =  \sum_{s \in \frac{1}{2} [1,2L] } Q_s^\dagger Q_s\\
		Q_s &  = \sum_{\substack{0\leq k\leq L: \\s-k\in\mathbb{Z}}} F(k)\ a_{(s-k) \bmod L}\  a_{(s+k) \bmod L}  . \notag 
	\end{align}
	where additions are understood modulo $ L $. 
	Hamiltonians of the form~\eqref{eq:HPseudo} are parent Hamiltonians for all basic filling fractions of the form $ 1/q$ (with $ q \geq 3$ odd for fermions, and $ q \geq 2 $ even for bosons~\cite{Regnault:2003aa}). In this case, $ \varphi $ plays the role of the Laughlin wavefunction~\cite{PhysRevLett.50.1395,Haldane:1985aa}, and $ T $ is the many-body magnetic translation on the torus~\cite{Haldane:1985ab,Frem15}. Choosing a torus geometry ensures the required ground state degeneracy \cite{Haldane:1985ab,Seidel:2005ab,Frem15,Burnell:2024aa}. We recall that FQHS physics (including gap properties in the thermodynamic limit) is expected to be robust with respect to the geometry \cite{haldane2018origin}. The class of Hamiltonians covered by the above assumptions also encompasses all other $ 2$-body pseudopotential Hamiltonians, including those whose ground states go beyond the basic filling fractions, as demonstrated, e.g., by the Jain-states~\cite{Chen:2017aa,Bandyopadhyay:2020aa}. The general $m$-body form of \eqref{eq:strucH} is relevant in the description of more exotic fractional quantum Hall states such as the Gaffnians~\cite{Lee:2015aa}. Finally, truncated pseudopotentials \cite{Bergholtz:2005pl,Nakamura:2012bu,PhysRevB.85.155116,NWY2020b,nachtergaele:2021a,warzel:2021,warzel:2022} are also within our framework, as well as their  limits~\cite{kapustin:2020}. In all these examples, the Hamiltonian is in fact positive with $ H_n^{(m)} \geq 0 $ for all $ n ,m $. 
	
	\subsection{Incompressibility} 
	In the above setup, the translates  $ T^j \varphi  $,  $ j=0,\ldots,q-1$, form an orthonormal basis of the ground state at $ n_q $,
		\begin{equation}\label{eq:gsp}
		\dim  \ker H_{n_q} = q ,
	\end{equation}
	reflecting the known, exact  $q$-fold degeneracy of the FQH-ground state at maximal filling on the torus~\cite{Haldane:1985ab,Seidel:2005ab,Frem15,Burnell:2024aa}. 
	This is easily seen by observing that, since $p$ and $q$ are coprime, these states are $ V $-eigenstates with distinct eigenvalues:
	\begin{equation}\label{eq:eigenvaluesV}
		V T^j \varphi  =   \exp\left( 2\pi i  \ \Big( \frac{d_\varphi }{L} + \frac{jp}{q} \Big) \right)  T^j \varphi ,
	\end{equation} 
where $ d_\varphi \in \mathbb{Z} $ the $ D $-eigenvalue corresponding to $ \varphi $.\\ 
	
	\begin{lemma}\label{lem:zeroall}
Suppose that $H$ satisfies Assumption\ref{A2} for some $n_q>m_1$. If additionally $H_{n}^{(m)} \geq 0$ for all $ m $ and $ n < n_q $, then for all $ m_1 \leq n \leq n_q $:
	$$   E^{(0)}_{n} = 0  $$
	\end{lemma}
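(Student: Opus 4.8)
The plan is to prove $E_n^{(0)}=0$ by \emph{downward} induction on $n$, starting at $n=n_q$ and descending to $n=m_1$, using the combinatorial identity~\eqref{IRid} to push a zero-energy state from one particle-number sector into the one below it. The only structural input needed is the identity~\eqref{IRid} together with the positivity hypotheses; as will be clear, the translation, charge, and dipole symmetries of Assumption~\ref{A2} play no role in this particular lemma.

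First I would record the trivial half of each equality. For $m_1\le n<n_q$ the additional hypothesis gives $H_n=\sum_{m}H_n^{(m)}\ge 0$, while $H_{n_q}\ge 0$ is part of Assumption~\ref{A2}; hence $E_n^{(0)}\ge 0$ for every $m_1\le n\le n_q$, and it remains only to exhibit a nonzero vector in $\ker H_n$ for each such $n$. At the top sector this is free: $\varphi\in\ker H_{n_q}\setminus\{0\}$ by the fractionally-filled-ground-state hypothesis.

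The engine is the following descent step: if $m_1+1\le n\le n_q$ and $0\ne\psi\in\ker H_n$, then $a_x\psi\in\ker H_{n-1}$ for every $x\in\Lambda$, and $a_x\psi\ne 0$ for at least one $x$. For the kernel claim, take the expectation in $\psi$ of both sides of~\eqref{IRid} with $n-1$ in the role of $n$ — legitimate since $n-1\ge m_1\ge m$ for every $m_0\le m\le m_1$ — to obtain $\langle\psi,H_n^{(m)}\psi\rangle=(n-m)^{-1}\sum_{x}\langle a_x\psi,H_{n-1}^{(m)}a_x\psi\rangle$; summing over $m$ and using $H_n\psi=0$ yields
\[
0=\sum_{m=m_0}^{m_1}\frac{1}{n-m}\sum_{x\in\Lambda}\langle a_x\psi,\,H_{n-1}^{(m)}a_x\psi\rangle .
\]
Every summand is nonnegative, because $n-m>0$ and $H_{n-1}^{(m)}\ge 0$ (note $n-1<n_q$, so the added hypothesis applies), hence every summand vanishes; this forces $H_{n-1}^{(m)}a_x\psi=0$ for all $m,x$, and therefore $H_{n-1}a_x\psi=0$. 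That some $a_x\psi\ne0$ follows from $\sum_x\|a_x\psi\|^2=\langle\psi,N\psi\rangle=n\|\psi\|^2>0$, using $n\ge m_1+1\ge 1$.

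Iterating this descent step starting from $\varphi\in\ker H_{n_q}$ then produces, sector by sector, nonzero elements of $\ker H_{n_q-1},\ker H_{n_q-2},\dots,\ker H_{m_1}$: the step from sector $n$ to sector $n-1$ is available exactly for $n\in\{m_1+1,\dots,n_q\}$, which covers all the targets. Combined with the previous paragraph, this gives $E_n^{(0)}=0$ for every $m_1\le n\le n_q$. I do not anticipate a real obstacle; the argument is essentially bookkeeping around~\eqref{IRid}. The one point needing care is the range of the descent — one must check $n-1\ge m$ for all relevant $m$ at each step, which is precisely why the conclusion is restricted to $n\ge m_1$ (rather than all $n\ge m_0$) and why the hypothesis is $n_q>m_1$, so that the very first descent step, from $n_q$ to $n_q-1$, is licensed.
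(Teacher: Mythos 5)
Your proof is correct and is essentially the paper's argument spelled out in detail: the paper's one-line proof descends from $\ker H_{n_q}$ using the summed inequality \eqref{IR}, while you perform the identical descent at the level of the $m$-body identity \eqref{IRid} and sum over $m$, which is the same computation. The bookkeeping on the range of validity ($n-1\ge m_1\ge m$) and the non-vanishing of some $a_x\psi$ are exactly the points that make the paper's terse proof work.
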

	\begin{proof} 
	This follows from the fact that $ E_{n_q}^{(0)} = 0 $ by successivley applying the inequality \eqref{IR} starting with $ n = n_q-1 $. 
	\end{proof}
	It is a hallmark of a system's incompressibility that upon adding an extra particle, the ground-state energy rises. In particular, in the situation of Lemma~\ref{lem:zeroall}, one has 
	$$ \Delta_{n_q} = E_{n_q+1}^{(0)} . $$
	As the following theorem illustrates, all systems satisfying the above assumptions are incompressible at $n_q$ since for any $ n > n_q $, the ground-state energy of $H_n$ is above zero, i.e., the ground-state energy of $H_{n_q}$:
	
	\begin{theorem}[Incompressibility at $ n_q $] \label{thm:nokern}
		For any $ n \geq n_q $ there are no zero energy eigenstates of $ H_{n+1} \geq 0$, i.e. 
		\begin{equation}\label{eq:nogs}
			\dim  \ker H_{n+1} = 0 , 
		\end{equation}
		as long as the system is sufficiently large: $ L \geq q^2/p $ in the case of fermions, and $ L \geq (1+p) q^2/p $ in the case of bosons. 
	\end{theorem}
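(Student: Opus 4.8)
The plan is to prove by induction on $n \geq n_q$ that $\ker H_{n+1} = \{0\}$. Since each $H_{m}^{(m')}\geq 0$ for $m' > m_0$ is assumed (and at the base of the induction we have $H_{n_q}\geq 0$, hence $\ker H_{n_q}$ is as in Assumption \ref{A2}), the Key Lemma applies at every step, giving $H_{n+1}\geq \frac{1}{n+1-m_0}\sum_{x\in\Lambda} a_x^\dagger H_n a_x$. If $\psi\in\ker H_{n+1}$ then $0=\langle\psi, H_{n+1}\psi\rangle \geq \frac{1}{n+1-m_0}\sum_{x}\langle a_x\psi, H_n a_x\psi\rangle\geq 0$, forcing $a_x\psi\in\ker H_n$ for every $x\in\Lambda$. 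For the inductive step when $n>n_q$, the inductive hypothesis $\ker H_n=\{0\}$ then forces $a_x\psi=0$ for all $x$, hence $N\psi=0$, contradicting $\psi\in\mathcal F_{n+1}^\pm$ with $n+1>n_q\geq 1$. So the entire content is the base case $n=n_q$: showing that $a_x\psi\in\ker H_{n_q}=\spa\{\varphi,T\varphi,\dots,T^{q-1}\varphi\}$ for all $x$ forces $\psi=0$ once $L$ is large enough.

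To handle the base case I would argue with the symmetries. Write $\psi=\sum_x c_x$ with $c_x\in\ker H_{n_q}$ meaning $a_x\psi = \sum_{j=0}^{q-1}\lambda_{x,j} T^j\varphi$; equivalently $P_{n_q}a_x\psi = a_x\psi$ for all $x$, i.e. $\psi$ lies in the range of $G_{n_q+1}=P_{n_q+1}\sum_x a_x^\dagger P_{n_q}a_x P_{n_q+1}$ scaled appropriately — but more cleanly, $\psi\in\ker H_{n_q+1}$ implies $\langle\psi,\sum_x a_x^\dagger P_{n_q} a_x\,\psi\rangle = \sum_x\|a_x\psi\|^2 = \langle\psi, N\psi\rangle = (n_q+1)\|\psi\|^2$, so $\psi$ is a norm-attaining vector for the operator $A:=\sum_x a_x^\dagger P_{n_q}a_x$, whose norm is exactly $\|G^{(n_q)}\|$ in the notation of Lemma \ref{lem:Gbound}. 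Thus a zero-energy state at $n_q+1$ can exist only if $\|G^{(n_q)}\|\geq n_q+1$. The strategy is therefore to prove the quantitative bound
\begin{equation}\label{eq:Gbdfinal}
\|G^{(n_q)}\| \;\leq\; \begin{cases} q & \text{(fermions)},\\ (1+p)q & \text{(bosons)},\end{cases}
\end{equation}
using the refined symmetry-based estimates promised at the end of Section \ref{sec:cngap}, and then observe that under $L\geq q^2/p$ (fermions) one has $n_q+1 = pL/q+1 \geq q+1 > q \geq \|G^{(n_q)}\|$, and similarly $n_q+1 \geq (1+p)q+1 > \|G^{(n_q)}\|$ under $L\geq(1+p)q^2/p$ (bosons), a contradiction. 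Hence $\ker H_{n_q+1}=\{0\}$, completing the base case.

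For the bound \eqref{eq:Gbdfinal} I would exploit that the ground-state basis $\{T^j\varphi\}_{j=0}^{q-1}$ consists of $V$-eigenvectors with the $q$ distinct eigenvalues in \eqref{eq:eigenvaluesV}, and that $a_x^\dagger$ shifts the $D$-eigenvalue in a controlled way ($[D,a_x^\dagger]=x\,a_x^\dagger$, so $a_x^\dagger T^j\varphi$ has $D$-eigenvalue $d_\varphi + n_q + x$ mod the relevant lattice). This should block-diagonalize $A=\sum_x a_x^\dagger P_{n_q}a_x$ with respect to $V$-eigenspaces of $\mathcal F_{n_q+1}^\pm$, and on each block one counts how many of the $q|\Lambda|$ vectors $a_x^\dagger T^j\varphi$ land in a given $V$-eigenspace; the periodicity $T^q\varphi=\varphi$ and translation-covariance $T^\dagger a_x T = a_{x-1}$ reduce this to a small, $x$-independent count, yielding the factor $q$ (fermions) or $(1+p)q$ (bosons). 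The bosonic overcount reflects that $\|a_x^\dagger\varphi\|^2$ can be as large as $1+\langle\varphi,N_x\varphi\rangle$ and the average filling per site is $n_q/L=p/q$, contributing the extra $p$. I expect the main obstacle to be precisely this refined operator-norm estimate on $A$ restricted to a $V$-eigenspace — in particular making the combinatorial counting of how the shifted states distribute among dipole sectors fully rigorous, and controlling cross terms $\langle a_x^\dagger T^j\varphi, a_y^\dagger T^{j'}\varphi\rangle$ for $x\neq y$ (handled in the fermionic case by the CAR trick from Lemma \ref{lem:Gbound}, but requiring more care for bosons). Everything else is a short induction plus the variational characterization of zero-energy states.
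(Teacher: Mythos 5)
Your proposal is correct in outline but follows a genuinely different route from the paper's. The paper argues pointwise on a putative zero mode: it takes $\psi\in\ker H_{n_q+1}$ to be a $V$-eigenvector, uses \eqref{eq:Vshifta} together with the distinctness of the $V$-eigenvalues \eqref{eq:eigenvaluesV} to conclude that $a_x\psi\neq 0$ for at most $q$ sites $x$, and then bounds the occupation $\langle\psi,N_x\psi\rangle$ (by $1$ for fermions, and by $p+1$ for bosons via a variance argument exploiting that $a_x\psi/\|a_x\psi\|$ is itself a ground state and \eqref{eq:boundNj}); the contradiction $n_q+1\leq q$ resp.\ $n_q+1\leq(p+1)q$ is the same one you reach. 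You instead convert the existence of a zero mode into the variational statement $\|\sum_x a_x^\dagger P_{n_q}a_x\|=\|G^{(n_q)}\|\geq n_q+1$ and contradict it with a Gram-matrix bound; this is sound, and it has the appeal of unifying the incompressibility statement with the $\|G_n\|$ machinery of Corollary~\ref{cor:cngap}. For fermions your needed bound $\|G^{(n_q)}\|\leq q$ is already Lemma~\ref{lem:Gbound}, so that case is complete. The only step you leave open is the bosonic bound $\|G^{(n_q)}\|\leq(1+p)q$, which is not in the paper (Lemma~\ref{lem:Gbound} gives only $n_qq$, and Lemma~\ref{lem:GFQHE} gives $1+p$ but assumes the commensurability $L=\ell q^2$ that Theorem~\ref{thm:nokern} does not). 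It is, however, provable exactly along the lines you sketch and without commensurability: since $a_x^\dagger T^{k}\varphi$ is a $V$-eigenvector with eigenvalue $\exp\bigl(2\pi i(d_\varphi+kn_q+x)/L\bigr)$, the Gram matrix is block diagonal with $q\times q$ blocks (one column per $k$, with $x$ determined by the dipole sector), and each block $G(\gamma)$ is positive semidefinite with
\begin{equation*}
\|G(\gamma)\|\;\leq\;\tr G(\gamma)\;=\;\sum_{k=0}^{q-1}\bigl(1+\langle\varphi,N_{y_k}\varphi\rangle\bigr)\;\leq\;q(1+p),
\end{equation*}
using \eqref{eq:boundNj}; the fermionic trace is likewise $\leq q$. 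With this, your thresholds $n_q+1>q$ and $n_q+1>(1+p)q$ reproduce exactly the hypotheses $L\geq q^2/p$ and $L\geq(1+p)q^2/p$ of the theorem, so your argument closes with the same constants as the paper's. The cross terms you worry about need no separate control: the trace bound sidesteps them entirely.
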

	\begin{proof}
	Given  \eqref{IR} for any $ n \geq n_q $, it suffices to consider the case $ n = n_q $.
	By way of contradiction, suppose that $ \psi \in \ker H_{n_{q}+1} $ is a normalized eigenfunction of $ V $ with eigenvalue $ \exp\left( 2\pi i  \ \frac{d_\psi}{L}  \right) $ for some $ d_\psi \in \mathbb{Z} $. Then, by~\eqref{IR}, we also have $ a_x \psi  \in \ker H_{n_q} $ for all $ x $, and 
	\begin{equation}\label{eq:Vshifta}
		V a_x\psi = \exp\left( 2\pi i  \ \frac{d_\psi - x }{L}  \right)  a_{x} \psi .
	\end{equation}
	This implies that any non-trivial $ a_x \psi \neq 0 $ agrees --- up to a phase and the normalization factor $ \| a_x \psi \| $ --- with one of the vectors in $\left\{ \varphi , T  \varphi ,  \dots , T^{q-1} \varphi \right\} $, and the eigenvalue
	$
	\exp\left( 2\pi i  \ \frac{d_\psi - x }{L}  \right) 
	$
	must coincide with one of the eigenvalues \eqref{eq:eigenvaluesV} of the eigenvectors of $V$ in $\ker H_{n_q}$. There are only $q$ values of $x$ for which this is possible. Hence, $a_x\psi\neq 0$  for at most $q$ values of $x$. Moreover,
	\begin{equation}\label{particlenumberbound}
		n_q+1 = \langle\psi , N \psi\rangle =\sum_{x=1}^L  \langle a_x\psi , a_x\psi\rangle \leq q \max_{x}  \langle\psi , N_x \psi\rangle.
	\end{equation}
	For fermions this implies $n_q+1 \leq q$, a contradiction since $L\geq \frac{q^2}{p}$. 
	
	For bosons, we estimate $ \langle\psi , N_x \psi\rangle$ as follows. First, the $ q $-periodicity of $ \varphi $ implies 
	\begin{equation}\label{eq:boundNj}
		\max_{1\leq y\leq L}  \langle \varphi , N_y  \varphi \rangle \leq  \sum_{y=1}^{q} \langle \varphi , N_y\varphi \rangle = \frac{q}{L}   \langle \varphi , N  \varphi \rangle = p.
	\end{equation}
	Since $a_x\psi \in \ker H_{n_q}$, we thus conclude
	\begin{align*}
		p\geq & \  \frac{\langle a_x\psi, N_x a_x\psi\rangle}{\langle a_x\psi, a_x\psi\rangle}
		=  \frac{\langle \psi, N_x^2 \psi\rangle-\langle \psi, N_x\psi\rangle}{\langle \psi, N_x\psi\rangle} \\
		\geq & \ \frac{\langle \psi, N_x \psi\rangle^2-\langle \psi, N_x\psi\rangle}{\langle \psi, N_x\psi\rangle}
		= \langle \psi, N_x\psi\rangle -1 .
	\end{align*}
	This implies
	$
	\langle \psi, N_x\psi\rangle\leq p+1  $, and so \eqref{particlenumberbound} yields $n_q+1 \leq (p+1)q$, a contradiction to the assumption $n_q \geq(p+1) q $. This concludes the proof of Theorem~\ref{thm:nokern}.
	\end{proof}

		\subsection{Charge and neutral gap in FQHS}  
	Comparing~\eqref{eq:gsp} and \eqref{eq:nogs}, the ground-state energy at $ n_q $ vs.\ $ n_q+1 $ exhibits a jump, which agrees with $
	 \Delta_{n_q}^+ = E^{(0)}_{n_q+1} > 0 $.  In comparison, 
	the spectral gap, $ \gap_{n_q} = E^{(1)}_{n_q} - E^{(0)}_{n_q} = E^{(1)}_{n_q} $, is the neutral gap at maximall filling and conjectured to be the lowest excited energy in the system.

	As a main result, we establish that in FQHS the charge gap is at least as large as the neutral gap. 
	\begin{theorem}[Charge vs. neutral gap  in FQHS] \label{thm:main}
		In the case of $q$-commensurate maximal filling, i.e. $ L = \ell q^2 $ for some $ \ell\geq 3  $, the charge gap dominates the neutral gap, for any $ n \geq n_q $
		\begin{align*}
			& E^{(0)}_{n+1}\geq \frac{n_q}{n_q+1-m_0}  \ \gap_{n_q}  & \mbox{(fermions)} \\
			&E^{(0)}_{n+1} \geq \frac{n_q-p }{n_q+1-m_0}  \  \gap_{n_q}     & \mbox{(bosons)} 
		\end{align*} 
	\end{theorem}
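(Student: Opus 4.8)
The plan is to combine Corollary~\ref{cor:cngap} with the incompressibility result of Lemma~\ref{lem:zeroall} and a sharpened estimate on $\|G_{n_q}\|$ coming from the symmetry structure of Assumption~\ref{A2}. First I would observe that under $q$-commensurate filling $L=\ell q^2$ we have $n_q = pL/q = p\ell q$, which is automatically an integer, and moreover $n_q \geq m_1$ provided $\ell$ is large enough; for $n \geq n_q$ the Key Lemma inequality \eqref{IR} propagates positivity so it suffices to treat $n = n_q$. Since all $H_n^{(m)}\geq 0$ in this setting, Lemma~\ref{lem:zeroall} gives $E^{(0)}_{n_q}=0$, hence $\gap_{n_q} = E^{(1)}_{n_q}$ and $\Delta_{n_q}^+ = E^{(0)}_{n_q+1}$. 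Plugging $E^{(0)}_{n_q}=0$ into \eqref{eq:chargeneutral} collapses the corollary to
\begin{equation*}
	E^{(0)}_{n_q+1} = \Delta_{n_q}^+ \geq \frac{n_q+1-\|G_{n_q}\|}{n_q}\,\gap_{n_q},
\end{equation*}
so the whole theorem reduces to proving the norm bounds $\|G_{n_q}\| \leq 1$ for fermions and $\|G_{n_q}\| \leq 1+p$ for bosons. Combined with $m_0 \le m_1 \le n_q$ and the elementary inequality $\tfrac{n_q+1-\|G_{n_q}\|}{n_q} \ge \tfrac{n_q+1-\|G_{n_q}\|}{n_q+1-m_0}$ this then yields the claimed fermionic bound $\tfrac{n_q}{n_q+1-m_0}\gap_{n_q}$ and bosonic bound $\tfrac{n_q-p}{n_q+1-m_0}\gap_{n_q}$; one must double-check that $n_q+1-\|G_{n_q}\|$ is nonnegative and that the denominator manipulation goes in the right direction, but that is routine once the norm bounds are in hand.

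The heart of the matter, and the step I expect to be the main obstacle, is the refined estimate on $\|G_{n_q}\|$. Recall $G_{n_q} = P_{n_q+1}\sum_x a_x^\dagger P_{n_q} a_x P_{n_q+1}$, and by the discussion preceding Lemma~\ref{lem:Gbound} it is bounded by the norm of $\sum_x a_x^\dagger P_{n_q} a_x$, equivalently by the norm of the many-body Gram matrix $G^{(n_q)}$ built from the orthonormal ground-state basis $\{T^j\varphi\}_{j=0}^{q-1}$. The naive Lemma~\ref{lem:Gbound} bound is $q_{n_q} = q$ (fermions) or $n_q q$ (bosons), which is far too weak. The improvement must come from the $V$-eigenvalue structure: by \eqref{eq:eigenvaluesV} the states $T^j\varphi$ are $V$-eigenstates with distinct eigenvalues $\exp(2\pi i(d_\varphi/L + jp/q))$, and since $V a_x^\dagger T^j\varphi = \exp(2\pi i(d_\varphi + x)/L)\cdots\, a_x^\dagger T^j\varphi$ by \eqref{eq:Vshifta}-type reasoning, the vectors $a_x^\dagger T^j\varphi$ for different $x$ (mod the lattice of size related to $q$) and different $j$ are mutually orthogonal unless the $V$-eigenvalues match. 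This orthogonality is what decouples the Gram matrix into small blocks. The plan is: for each fixed $x$, the $q$ vectors $a_x^\dagger T^j\varphi$, $j=0,\dots,q-1$, are already orthogonal among themselves (distinct $V$-eigenvalues, since $p$ is coprime to $q$), so $\sum_j |a_x^\dagger T^j\varphi\rangle\langle a_x^\dagger T^j\varphi|$ has norm $\max_j \|a_x^\dagger T^j\varphi\|^2$; then summing over $x$, the contribution from different $x$ with incompatible $V$-eigenvalues does not interfere, and one is left to control $\max_{j,x}\|a_x^\dagger T^j\varphi\|^2 = 1 + \langle T^j\varphi, N_x T^j\varphi\rangle$ (bosons) or $\le 1$ (fermions, by the CAR computation $\|a_x^\dagger\varphi\|^2 = 1 - \|a_x\varphi\|^2 \le 1$ already used in Lemma~\ref{lem:Gbound}).

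For fermions this immediately gives $\|G_{n_q}\| \le 1$ once the $x$-decoupling is justified. For bosons I would use the $q$-periodicity of $\varphi$ and the bound \eqref{eq:boundNj}, namely $\max_y \langle\varphi, N_y\varphi\rangle \le \sum_{y=1}^q \langle\varphi,N_y\varphi\rangle = (q/L)\langle\varphi,N\varphi\rangle = p$, which is translation-invariant in $j$ since $T$ only shifts the site label; hence $\|a_x^\dagger T^j\varphi\|^2 \le 1 + p$, giving $\|G_{n_q}\| \le 1+p$. The technical point requiring care is the claim that the cross terms $\langle a_x^\dagger T^j\varphi, a_y^\dagger T^k\varphi\rangle$ vanish whenever $(x,j)\ne(y,k)$ carry different $V$-eigenvalues: this follows because $a_x^\dagger T^j\varphi$ has $V$-eigenvalue $\exp(2\pi i(d_\varphi + x)/L)\exp(2\pi i jp/q)$ and two such eigenvalues coincide iff $x + jp\ell q \equiv y + kp\ell q \pmod L$ with $L = \ell q^2$, i.e. iff $x - y \equiv (k-j)p\ell q \pmod{\ell q^2}$; so for each residue of $x$ modulo $\ell q$ there are exactly $q$ compatible pairs, the Gram matrix is block-diagonal in these residue classes with $q\times q$ blocks, and each block — being a Gram matrix of $q$ mutually orthogonal (distinct $V$-eigenvalue within the block coming from the $j$-index and fixed compatible $x$'s) vectors, or more carefully a sum of rank-one terms — has norm at most $\max\|a_x^\dagger T^j\varphi\|^2$. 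I would need to verify this block structure carefully and confirm that the commensurability hypothesis $L=\ell q^2$ (rather than a weaker divisibility) is exactly what makes the counting work out to $q$ compatible sites per block; this combinatorial bookkeeping is where I expect to spend the most effort, and it is plausibly the reason the theorem is stated only under $q$-commensurate filling rather than the general Assumption~\ref{A2}.
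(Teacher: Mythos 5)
Your overall architecture matches the paper's: reduce to $n=n_q$, use $E^{(0)}_{n_q}=0$ so that the recursive ground-state bound turns $\gap_{n_q}$ into a lower bound on $E^{(0)}_{n_q+1}$, and prove the refined norm bounds $\|G_{n_q}\|\le 1$ (fermions), $\|G_{n_q}\|\le 1+p$ (bosons) by block-diagonalizing the many-body Gram matrix along $V$-eigenvalues. However, there are two genuine flaws. First, routing through Corollary~\ref{cor:cngap} produces the denominator $n_q$ rather than $n_q+1-m_0$, and your proposed repair, $\tfrac{n_q+1-\|G_{n_q}\|}{n_q}\ge\tfrac{n_q+1-\|G_{n_q}\|}{n_q+1-m_0}$, is false whenever $m_0>1$: since $m_0\ge 1$ gives $n_q+1-m_0\le n_q$, the inequality runs the other way, and for the physically relevant case $m_0=2$ your route only yields $E^{(0)}_{n_q+1}\ge\gap_{n_q}$, strictly weaker than the claimed $\tfrac{n_q}{n_q-1}\gap_{n_q}$. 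The fix is simply to apply \eqref{eq:gsind} directly with $E^{(0)}_{n_q}=0$, which gives $E^{(0)}_{n_q+1}\ge\tfrac{n_q+1-\|G_{n_q}\|}{n_q+1-m_0}\,\gap_{n_q}$ without any detour; this is what the paper does.

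Second, and more seriously, your bound on each block of the Gram matrix is incorrect. The blocks collect exactly those pairs $(x,k)$ whose vectors $a_x^\dagger T^{k-1}\varphi$ share the \emph{same} $V$-eigenvalue (i.e.\ $x+kn_q\equiv x'+k'n_q \bmod L$), so the $q$ vectors inside a block are generally \emph{not} mutually orthogonal — orthogonality holds across blocks, not within them. Consequently the norm of a $q\times q$ block is not controlled by $\max_{j,x}\|a_x^\dagger T^j\varphi\|^2$; a Gram matrix of $q$ nearly parallel unit vectors has norm close to $q$, not $1$. The missing ingredient is the structural identity $G(\gamma)=\mathbbm{1}\pm F(\gamma)$ obtained by normal-ordering via the CAR/CCR, where $F(\gamma)$ is \emph{itself} a positive semidefinite Gram matrix — and establishing that $F(\gamma)$ is a Gram matrix is precisely where the commensurability $L=\ell q^2$ enters: it makes $n_q=p\ell q$ a multiple of $q$, so the $q$-periodicity $T^q\varphi=\varphi$ removes the residual translation $T^{(k+k')n_q}$ and the entries factor as honest inner products. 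For fermions the minus sign then gives $G(\gamma)=\mathbbm{1}-F(\gamma)\le\mathbbm{1}$ outright; for bosons one needs the trace bound $\tr F(\gamma)=\sum_{x=1}^q\langle\varphi,N_x\varphi\rangle=p$ (not just the diagonal-entry estimate $\le p$), whence $F(\gamma)\le p\,\mathbbm{1}$ and $\|G(\gamma)\|\le 1+p$. Your sketch identifies the correct answer but the step that bounds each block is not valid as written.
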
 
	\begin{proof} Both assertions are a consequence of the spectral comparison Theorem~\ref{thm:spectralcomp}. 
	The bound~\eqref{eq:gsind} already implies $ E^{(0)}_{n+1} \geq \frac{n+1}{n+1-m_0}  E^{(0)}_{n} $ for all $n\geq n_q+1$. Therefore, it suffices to prove the claim for $n=n_q$, in which case the inequalities follow from~\eqref{eq:gsind}  and Lemma~\ref{lem:GFQHE} below. 
	\end{proof}

The above theorem is an 
	 improvement over Corollary~\ref{cor:cngap}, in particular in the bosonic situation. As we will see in Lemma~\ref{lem:GFQHE}  below, it rests on utilizing the three symmetries found in FQHS. Before turning to this proof, 
a few remarks are in order.
	\begin{enumerate}[label=(\roman*)]
	\item  For bosons, the bound from Theorem~\ref{thm:main} implies that the charge gap dominates the neutral gap if $m_0\geq p+1 $.   For fermions, the bound is conclusive even for $ m_0 = 1 $, which allows for the presence of a suitably adjusted, negative chemical potential $ \mu N  $ in the Hamiltonian. Moreover, as the proof of Lemma~\ref{lem:GFQHE} shows, the commensurability condition $ L = \ell q^2 $ can be dropped for $p=1$ and $q=2$.

\item	There is evidence that in geometries without translation invariance finite-size effects may cause a violation of the inequalities \cite{balram2024fractional},   but it is expected that the gap domination result always holds in the bulk. More precisely, by considering a sequence of system sizes of the form $L=\ell q^2$ tending to infinity, one sees that our the domination of the charge gap over the neutral gap extends to the thermodynamic limit. 
	
	\item 
	The inequality among the two gaps is shared by the nematic fractional Hall phases~\cite{Fradkin:1999aa,Regnault:2017aa, PhysRevResearch.2.033362,Pu:2024aa}. The characteristic of the nematic phase is that the neutral gap vanishes while the charge gap remains open. 
	Since the above theorem covers such systems, one cannot expect (nor do we assume for the validity of our result) that the neutral gap is generally uniformly positive.

	\item For Haldane pseudopotentials~\eqref{eq:HPseudo}, our lower bounds may be combined and compared to the upper bounds on the charge gap derived in~\cite{Weerasinghe:2016aa}:
	\[
	\gap_{n_q+1}  \leq \begin{cases} \frac{4p}{q-p}\ \Delta & \hfill \mbox{(fermions)} \\
		\frac{4p}{q+p} \ \Delta & \hfill  \mbox{(bosons)}
	\end{cases}
	\]
	where $ \Delta =  \frac{1}{4}   \sum_{s} \sum_{k:s-k\in\mathbb{Z}} \left| F(k) \right|^2  $.\\ %
		\end{enumerate}
	The key to the proof of  Theorem~\ref{thm:main} is the following considerable improvement of Lemma~\ref{lem:Gbound} in the FQHS setup. 
	\begin{lemma}\label{lem:GFQHE}
	In the situation of Theorem~\ref{thm:main}:
	\begin{align*}
	 \left\| G_{n_q} \right\| \leq   \big\| G^{(n_q)} \big\| \leq \begin{cases} 1  & \hfill \mbox{(fermions)} \\
		1+ p   & \hfill  \mbox{(bosons)}
	\end{cases}
	\end{align*}
	\end{lemma}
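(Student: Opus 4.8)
We bound $\|\mathcal G_{n_q}\|$, where $\mathcal G_{n_q}:=\sum_{x\in\Lambda}a_x^\dagger P_{n_q}a_x$ acts on $\mathcal F^\pm_{n_q+1}$; as recalled just before Lemma~\ref{lem:Gbound}, this dominates $\|G_{n_q}\|$ and equals $\|G^{(n_q)}\|$. Since $H$ — hence $P_{n_q}$ — commutes with $U$ and $V$, while $U^\dagger a_x U=\e^{2\pi i/L}a_x$ and $V^\dagger a_x V=\e^{2\pi i x/L}a_x$, the phases produced by $a_x^\dagger$ and $a_x$ cancel and $\mathcal G_{n_q}$ commutes with both $U$ and $V$. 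By \eqref{eq:eigenvaluesV} and this covariance, $a_x^\dagger T^j\varphi$ is a $V$-eigenvector with eigenvalue $\e^{2\pi i(d_\varphi+x+jn_q)/L}$; hence for each fixed $j$ the $|\Lambda|$ vectors $\{a_x^\dagger T^j\varphi\}_{x\in\Lambda}$ are mutually orthogonal, and a given $V$-eigenvalue $\e^{2\pi i m/L}$ is realized by exactly one site $x_j\equiv m-d_\varphi-jn_q\pmod L$ for each $j\in\{0,\dots,q-1\}$. Thus $\mathcal G_{n_q}$ is block diagonal over the $V$-eigenspaces and $\|G^{(n_q)}\|=\max_m\|G^{(n_q)}_m\|$, where $G^{(n_q)}_m$ is the $q\times q$ Gram matrix with entries $\langle a_{x_i}^\dagger T^i\varphi,\,a_{x_j}^\dagger T^j\varphi\rangle$. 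Because $x_i\ne x_j$ for $i\ne j$ — here $1\le p<q$ coprime and $n_q=pL/q$ enter, via $q\nmid(i-j)p$ — the (anti)commutation relations give $G^{(n_q)}_m=\mathbbm 1\mp M^{(m)}$ with $M^{(m)}_{ij}:=\langle a_{x_j}T^i\varphi,\,a_{x_i}T^j\varphi\rangle$ and $M^{(m)}_{ii}=\langle\varphi,N_{x_i-i}\varphi\rangle$, the sign being $-$ for fermions and $+$ for bosons.

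The crux is to identify $M^{(m)}$ as a Gram matrix of trace $p$. Using $T^{c}a_xT^{-c}=a_{x+c}$ one may write $a_{x_a}T^b\varphi=T^{x_a}\,a_0\,T^{\,b-x_a}\varphi$, and here the commensurability $L=\ell q^2$ enters decisively: it forces $n_q=p\ell q\equiv0\pmod q$, so $x_a\equiv m-d_\varphi\pmod q$ and, by $T^q\varphi=\varphi$, $T^{\,b-x_a}\varphi=T^{\,(b+d_\varphi-m)\bmod q}\varphi$. Carrying this through (a relabelling of the index set by a fixed permutation) gives
\[
M^{(m)}_{ij}=\big\langle \psi_{i''},\,S^{(j-i)p}\,\psi_{j''}\big\rangle,\qquad \psi_b:=a_0\,T^b\varphi\in\mathcal F^\pm_{n_q-1},\quad S:=T^{L/q},
\]
where $i''\equiv i+d_\varphi-m$, $j''\equiv j+d_\varphi-m\pmod q$ (so $j-i\equiv j''-i''$), and $S$ has order exactly $q$ since $S^k=T^{kL/q}$ and $T^L=\mathbbm 1$. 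In particular $\|M^{(m)}\|$ is independent of $m$. Now "untwist" by $\widetilde\psi_b:=S^{bp}\psi_b$: as $S$ is unitary, $\langle\widetilde\psi_{i''},\widetilde\psi_{j''}\rangle=\langle\psi_{i''},S^{(j''-i'')p}\psi_{j''}\rangle=M^{(m)}_{ij}$, so up to the fixed permutation $M^{(m)}$ is the Gram matrix of $\{\widetilde\psi_b\}_{b=0}^{q-1}$. Hence $M^{(m)}\ge0$, and
\[
\tr M^{(m)}=\sum_{b=0}^{q-1}\|\widetilde\psi_b\|^2=\sum_{b=0}^{q-1}\|\psi_b\|^2=\sum_{b=0}^{q-1}\langle\varphi,N_{-b}\varphi\rangle=\sum_{y=1}^{q}\langle\varphi,N_y\varphi\rangle=\tfrac qL\langle\varphi,N\varphi\rangle=p,
\]
using that $y\mapsto\langle\varphi,N_y\varphi\rangle$ is $q$-periodic (from $T^q\varphi=\varphi$) and the computation in \eqref{eq:boundNj}.

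It remains to combine. For fermions the $q\times q$ Gram matrix $\mathbbm 1-M^{(m)}$ is positive, so $0\le M^{(m)}\le\mathbbm 1$ and $\|\mathbbm 1-M^{(m)}\|\le1$; for bosons $M^{(m)}\ge0$ with $\|M^{(m)}\|\le\tr M^{(m)}=p$, so $\|\mathbbm 1+M^{(m)}\|=1+\|M^{(m)}\|\le1+p$. Taking the maximum over $m$ gives $\|G_{n_q}\|\le\|G^{(n_q)}\|=\|\mathcal G_{n_q}\|\le1$ for fermions and $\le1+p$ for bosons. The case $p=1$, $q=2$ is special in that the shift $\bmod\,q$ above can be absorbed and the $2\times2$ matrix $M^{(m)}$ is still handled by the same untwisting without assuming $q^2\mid L$; this is the origin of the remark following Theorem~\ref{thm:main}.

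I expect the main obstacle to be the index bookkeeping in the second paragraph: keeping careful track of how the annihilation site and the power of $T$ in $M^{(m)}_{ij}$ transform under the successive conjugations by powers of $T$, and isolating that it is precisely $n_q\equiv0\pmod q$ (equivalently $q^2\mid L$) that collapses the two $\varphi$-translates appearing in $M^{(m)}_{ij}$ to a pair $\psi_{i''},\psi_{j''}$ governed by a single twist $S^{(j-i)p}$, so that the global untwisting $\widetilde\psi_b=S^{bp}\psi_b$ becomes available; everything else (the $V$-covariance block decomposition, the Gram-matrix identity, and the trace computation via $q$-periodicity) is routine.
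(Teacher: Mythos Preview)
Your proof is correct and follows essentially the same route as the paper: block-diagonalize the Gram matrix via the $V$-eigenvalue structure, use the (anti)commutation relations to write each $q\times q$ block as $\mathbbm{1}\mp M$, and then exploit the $q$-periodicity of $\varphi$ together with $n_q\equiv 0\pmod q$ (from $L=\ell q^2$) to recognize $M$ as a Gram matrix of trace $p$. The only cosmetic difference is in that last step: the paper shifts both $T$-exponents symmetrically by $(k+k')n_q$ to land directly in Gram form, whereas you shift asymmetrically and then ``untwist'' via $\widetilde\psi_b=S^{bp}\psi_b$; these are equivalent rearrangements of the same identity.
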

	\begin{proof}
	The first inequality agrees with the first step in~\eqref{eq:firstest}. It is thus left to estimate the operator norm of the many-body Gram matrix $ G := G^{(n_q)} $ with entries 
	\[
	G_{xk,x'k'} =   \langle  a_x^\dagger T^{k-1} \varphi,  a_{x'}^\dagger T^{k'-1} \varphi\rangle ,
	\]
	indexed by $ x ,x'\in[1:L]$, $ k,k' \in [1:q] $, which is accomplished using the symmetries of the system. Denoting the $D$-eigenvalue of $ \varphi $ by $ d_\varphi  $, we have 
	\[ V  a_x^\dagger T^{k-1} \varphi =  \exp\left( 2\pi i \frac{d_\varphi + (k-1) n_q x }{L}\right) a_x^\dagger T^{k-1} \varphi  .
	\]
	This implies that 
	$ \langle  a_x^\dagger T^{k-1} \varphi,  a_{x'}^\dagger T^{k'-1} \varphi\rangle = 0  $ unless $ \left( k n_q +x\right) \bmod L = \left( k' n_q +x' \right) \bmod L  $. 
	The Gram matrix is hence block diagonal  
	\[ 
	G =  \bigoplus_{\gamma = 1}^L \ G(\gamma)  
	\] 
	with $ q\times q $ blocks  
	$$
		G(\gamma)_{k,k'}   = \left\langle a_{(\gamma-n_q k )\bmod L}^\dagger \ T^{k -1} \varphi ,  a_{(\gamma-n_q k' )\bmod L}^\dagger \ T^{ k'-1} \varphi \right\rangle . 
	$$
	In turn,  the (anti-)commutation rules imply that each block has the following structure $
	G(\gamma)   =  \mathbbm{1} \pm F(\gamma) $ 
	with $ (-) $ for fermions and  $ (+) $ for bosons and, by translating,
	\begin{align*}
		F(\gamma)_{k,k'} =    \langle &a_{(\gamma+n_qk )\! \!\bmod\! L} \ T^{k -1 +(k+k') n_q} \varphi ,\\  &a_{(\gamma+n_q k') \!\!\bmod \! L} \ T^{ k'-1+ (k+k') n_q} \varphi \rangle.
	\end{align*}
	Note that $n_q=p L/q=p\ell q$ is a multiple of $q$. Hence, the $ q $-periodicity of $ \varphi $ implies that 
	$ F(\gamma) $ 
	is another Gram matrix, such that $ F(\gamma) \geq 0 $. For fermions, this already suffices because it implies $ 0 \leq    G(\gamma)   \leq    \mathbbm{1}  $ and so $\|G\|\leq 1$.
	
	For bosons, \eqref{eq:boundNj} gives
	\begin{align*} \tr F(\gamma) = & \sum_{x=1}^q\left\langle \varphi , N_{(\gamma+1 -(n_q+1) x) \bmod p} \ \varphi \right\rangle \\
		= &  \sum_{x=1}^q\left\langle \varphi , N_{x} \ \varphi \right\rangle = p.
	\end{align*}
	This implies $ 0 \leq F(\gamma) \leq p \mathbbm{1} $ and hence $ 0 \leq    G(\gamma)   \leq  (1+p)  \ \mathbbm{1}  $, which proves the claim.
	\end{proof}
	
		We conclude by briefly describing  the novel approach  to the Haldane FQHS gap conjecture that follows by iterating Relation (II).
We  recall that for pseudopotentials~\eqref{eq:HPseudo} corresponding to $ 1/q$-filling with small $ q \in \{ 2, 3 , \dots \} $, Haldane's FQHS conjecture states that the excitation gap above the ground-state energy is uniformly positive, $ \gap_{nq}\geq c > 0 $. Note that this is not in contradiction to Lieb-Schultz-Mattis type results for systems with dipolar symmetry~\cite{Oshikawa:2000aa,Burnell:2024aa}.  For larger values of $ q $, the neutral gap is conjectured to close~\cite{PhysRevLett.54.237}. 	

As an approach to this problem, one may iterate \eqref{eq:exind} staring from $ n = m_1 $ up to $n\leq n_q$. To obtain a positive gap, one needs a uniform upper bound on $$ \sum_{k\geq m_1}^{n_{\max}} \frac{\| F^{(k)} \| - m_0 }{k+1} .$$
Obtaining this will require a fine understanding of the matrix $F^{(k)}$ from \eqref{eq:Fndefn}, which is related to the overlap between a ground state with a particle added and the set of excited states.

	\section*{Acknowledgments}
	This work was supported by the  DFG under the grants TRR 352 – Project-ID 470903074 (ML, SW) and EXC-2111-390814868 (SW), and by the National Science Foundation under grant DMS-2108390 (BN). All authors acknowledge support through a SQuaRE collaboration grant from the American Institute of Mathematics at Caltech, where part of this work was accomplished. BN and AY thank the TU Munich for kind hospitality during a visit in the course of this work. BN and SW thank the Max Planck Institute for Mathematics (Bonn) for hospitality during the final stages of this work. BN and SW thank Duncan Haldane for valuable comments. We thank Ting-Chun Lin and the anonymous referees for useful comments. 
	\bibliographystyle{abbrv}  
	\bibliography{CNGv9} 
	\end{document}